\theoremstyle{plain}
\newtheorem{lem}{\protect\lemmaname}
\theoremstyle{plain}
\newtheorem{prop}{\protect\propositionname}
\theoremstyle{plain}
\newtheorem{assumption}{\protect\assumptionname}
\theoremstyle{definition}
\providecommand{\assumptionname}{Assumption}
\providecommand{\lemmaname}{Lemma}
\providecommand{\propositionname}{Proposition}
\begin{document}
\title{Heterogeneity, trade integration and spatial inequality\thanks{This work is dedicated to Carlos Hervés-Beloso. I am indebted to Sofia
B. S. D. Castro and João Correia da Silva for their invaluable comments
and suggestions and also for their help in revising previous and related
versions of this work. I am also very thankful to the audience at
the "Porto Workshop on Economic Theory and Applications - in honor
of Carlos Hervés-Beloso" for the fruitful discussions. This research
has been financed by Portuguese public funds through FCT - Fundação
para a Ciência e a Tecnologia, I.P., in the framework of the projects
with references UIDB/00731/2020 and PTDC/EGE-ECO/30080/2017 and by
Centro de Economia e Finanças (CEF.UP), which is financed through
FCT. Part of this research was developed while José M. Gaspar was
a researcher at the Research Centre in Management and Economics, Católica
Porto Business School, Universidade Católica Portuguesa, through the
grant CEECIND/02741/2017.}}
\author{José M. Gaspar\thanks{School of Economics and Management and CEF.UP, University of Porto.
Email: jgaspar@fep.up.pt.}}
\date{\vspace{-5ex}
}

\maketitle
 
\begin{abstract}
We study the impact of economic integration on agglomeration in a
model where all consumers are inter-regionally mobile and have heterogeneous
preferences regarding their residential location choices. This heterogeneity
is the unique dispersion force in the model. We show that, under reasonable
values for the elasticity of substitution among varieties of consumption
goods, a higher trade integration always promotes more symmetric spatial
patterns, reducing the spatial inequality between regions, irrespective
of the functional form of the dispersion force. We also show that
an increase in the degree of heterogeneity in preferences for location
leads to a more even spatial distribution of economic activities and
thus also reduces the spatial inequality between regions.
\end{abstract}
\bigskip{}

\noindent\textbf{Keywords: }heterogeneous location preferences; economic
integration; economic geography; spatial development.

\noindent\textbf{JEL codes: }R10, R12, R23

\section{Introduction}

In this paper, we study the possible spatial distributions in the
$2$-region Core-Periphery (CP) model by Murata (2003), where all
labour is free to migrate between regions and workers are heterogeneous
regarding their preferences for location. Our aim is to investigate
how spatial patterns evolve as regions become more integrated and
whether the degree of heterogeneity has qualitative impact on this
relationship. Heterogeneity in preferences for location generates
a local dispersion force (akin to congestion), which we model as a
utility penalty, following the recent work of Castro et al. (2022).
This framework can been shown to encompass various functional forms
used in discrete choice models of probabilistic migration, including
the well-known Logit model (used e.g. by Murata, 2003; Tabuchi and
Thisse, 2002).

We show that, irrespective of the degree of heterogeneity, a higher
trade integration always promotes a more symmetric spatial distribution
of economic activities, thus reducing the spatial inequality in terms
of industry size between regions. Since heterogeneity does not co-vary
with transportation costs, the latter only affect the strength of
agglomeration forces due to increasing returns. This is a reversion
of the usual prediction that lower transport costs lead to agglomeration
(Krugman, 1991; Fujita et al., 1999, Chap. 5), and contrasts the findings
that consumer heterogeneity leads to a bell-shaped relationship between
decreasing transport costs and the spatial distribution of economic
activities (Murata, 2003; Tabuchi and Thisse, 2002). Since there is
no immobile workforce (no fixed regional internal demand), there is
a lower incentive for firms to relocate to smaller markets in order
to capture a higher share of local demand and avoid competition when
transport costs are higher. Krugman and Elizondo (1996), Helpman (1998)
and Murata and Thisse (2005), have also obtained similar results concerning
the relationship between transportation costs and the spatial distribution
of economic activities. The first includes a congestion cost in the
core, the second considers a non-tradable housing sector and the treatment
is only numerical, while the latter's prediction cannot be disassociated
from the interplay between inter-regional transportation costs and
intra-regional commuting costs. Recently, Tabuchi et al. (2018) have
reached similar conclusions to ours arguing that falling transport
costs increase the incentives for firms in peripheral regions to increase
production since they have a better access to the core, thus contributing
to the dispersion of economic activities. Allen and Arkolakis (2014)
estimated that the removal of the US Interstate Highway System, which,
by limiting accesses, can be interpreted as an increase in transportation
costs, would cause a redistribution of the population from more economically
remote regions to less remote regions in the US. This adds empirical
validity to our findings.

The rest of the paper is organized as follows. In section 2 we describe
the model and the short-run equilibrium. In section 3 we study the
qualitative properties of the long-run equilibria. Finally, section
4 is left for some concluding remarks.

\section{The model}

There are two regions $L$ and $R$, symmetric in all respects, with
a total population of mass $1$. Consumers are assumed to have heterogeneous
preferences with respect to the region in which they reside. We follow
Castro et al. (2022) and assume that these preferences are described
by a parameter $x,$ uniformly distributed along the interval $[0,1]$.
The consumer with preference described by $x=0$ (resp. $x=1$) has
the highest preference for residing in region $L$ (resp. region $R$).
An agent whose preference corresponds to $x=1/2$ is indifferent between
either region. We can thus refer to each consumer with respect to
his preference towards living in region $L$ as $x\in[0,1]$.

For a consumer with preference $x$, the utilities from living in
region $L$ and $R$ are given, respectively, by: 
\begin{align}
U_{L}(x)= & U\left(u\left(C_{L}\right),t(x)\right)\nonumber \\
U_{R}(x)= & U\left(u\left(C_{R}\right),t(1-x)\right),\label{eq: utility}
\end{align}
where $C_{i}$ denotes the level of consumption of a consumer living
in region $i\in\left\{ L,R\right\} $, and $t:[0,1]\mapsto\mathbb{R}$
is the utility penalty of consumer $x$ associated with living in
$L$, while $t(1-x)$ is the utility penalty the same consumer faces
from living in region $R.$ We assume that $t(x)$ is differentiable,
such that $t^{\prime}(x)>0,\forall x\in[0,1]$. 

The number of agents in region $L$ is given by $h\in[0,1]$ and fully
describes the spatial distribution of economic activities. In the
short-run, the spatial distribution of workers and firms is fixed
and an equilibrium is reached when all prices, wages, and output adjust
to clear markets given $h$. In what follows, the short-run equilibrium
is equivalent to Murata (2003) and Tabuchi et al. (2018), which we
briefly describe here. The consumption aggregate is a CES composite
given by: 
\[
C=\left[\int_{s\in S}c(s)^{\tfrac{\sigma-1}{\sigma}}ds\right]^{\tfrac{\sigma}{\sigma-1}},
\]
where $s$ stands for the variety produced by each monopolistically
competitive firm and $\sigma$ is the constant elasticity of substitution
between any two varieties. The consumer is subject to the budget constraint
$P_{i}C_{i}=w_{i}$, where $P_{i}$ is the price index and $w_{i}$
is the nominal wage in region $i$. Utility maximization by a consumer
in region $i$ yields the following demand for each manufactured variety
produced in $j$ and consumed in $i$: 
\begin{equation}
c_{ij}=\dfrac{p_{ij}^{-\sigma}}{P_{i}^{1-\sigma}}w_{i},\label{eq:individual demand}
\end{equation}
where: 
\begin{equation}
P_{i}=\left[\int_{s\in S}p_{i}(s)^{1-\sigma}d(s)\right]^{\tfrac{1}{1-\sigma}},\label{eq:price index 1}
\end{equation}
is the manufacturing price index for region $i$. A manufacturing
firm faces the following cost function: 
\[
TC(q)=w\left(\beta q+\alpha\right),
\]
where $q$ corresponds to a firm's total production of manufacturing
goods, $\beta$ is the input requirement (per unit of output) and
$\alpha$ is the fixed input requirement. The manufacturing good is
subject to trade barriers in the form of iceberg costs, $\tau\in(1,+\infty)$:
a firm ships $\tau$ units of a good to a foreign region for each
unit that arrives there. Assuming free entry in the manufacturing
sector, at equilibrium firms will earn zero profits, which translates
into the following condition: 
\begin{align*}
\pi\equiv\left(p-\beta w\right)q-\alpha w & =0,
\end{align*}
which gives the firm's total equilibrium output, symmetric across
regions: 
\begin{equation}
q_{L}=q_{R}=\frac{\alpha(\sigma-1)}{\beta},\label{eq: total production}
\end{equation}
and the following profit maximizing prices: 
\begin{equation}
p_{LL}=\dfrac{\beta\sigma}{\sigma-1}w_{L}\text{ \ and\  }p_{RL}=\dfrac{\beta\sigma\tau}{\sigma-1}w_{L},\label{eq:optimal prices}
\end{equation}

\noindent where $p_{ij}$ is the price of a good produced in region
$i$ and consumed in region $j$. Labour-market clearing implies that
the number of agents in region $L$, equals labour employed by a firm
times the number of varieties (and hence firms) $n_{L}$ produced
in region $L$: 
\[
h=n_{L}\left(\alpha+\beta q_{L}\right),
\]
from where, using (\ref{eq: total production}), we get the number
of firms in each region $i$: 
\begin{equation}
n_{L}=\dfrac{h}{\sigma\alpha}\text{ \ and \ }n_{R}=\dfrac{1-h}{\sigma\alpha}.\label{eq:number of varieties}
\end{equation}

\noindent Choosing labour in region $R$ as the numeraire, we can
normalize $w_{R}$ to $1$. The price indices in $L$ and $R$ using
(\ref{eq:price index 1}) are given, after (\ref{eq:number of varieties}),
by:{\small{} 
\begin{align}
P_{L} & =\left[\dfrac{h}{\sigma\alpha}\left(\dfrac{\beta\sigma}{\sigma-1}w\right)^{1-\sigma}+\dfrac{1-h}{\sigma\alpha}\left(\dfrac{\beta\sigma\tau}{\sigma-1}\right)^{1-\sigma}\right]^{\tfrac{1}{1-\sigma}},\nonumber \\
P_{R} & \text{ =\ensuremath{\left[\dfrac{1-h}{\sigma\alpha}\left(\dfrac{\beta\sigma}{\sigma-1}\right)^{1-\sigma}+\dfrac{h}{\sigma\alpha}\left(\dfrac{\beta\sigma\tau}{\sigma-1}w\right)^{1-\sigma}\right]^{\tfrac{1}{1-\sigma}}}.}\label{eq: price index 2}
\end{align}
}{\small\par}

\noindent Rewriting a firm's profit in region $L$ as: 
\[
\pi_{L}=\left(p_{LL}-\beta w\right)\left[hc_{LL}+(1-h)\tau c_{RL}\right]-\alpha w,
\]
the zero profit condition yields, after using (\ref{eq:individual demand}),
(\ref{eq:optimal prices}) and (\ref{eq: price index 2}), the following
wage equation: 
\begin{equation}
h=\dfrac{w^{\sigma}-\phi}{w^{\sigma}-\phi+w(w^{-\sigma}-\phi)},\label{eq:short run equilibrium}
\end{equation}
where $\phi\equiv\tau^{1-\sigma}\in(0,1)$ is the \emph{freeness of
trade}.

The nominal wage $w$ can be implicitly defined as a function of the
spatial distribution in region $L$, $h\in[0,1]$.\footnote{The conditions for application of the Implicit Function Theorem are
shown to be satisfied in Appendix A.} We have $w(0)=\phi^{1/\sigma}<1$, $w(1/2)=1$ and $w(1)=\phi^{-1/\sigma}>1$.
We say that $L$ is larger than $R$ when $h>1/2,$ and smaller otherwise.

\begin{figure}[!h]
\begin{centering}
\includegraphics[scale=0.75]{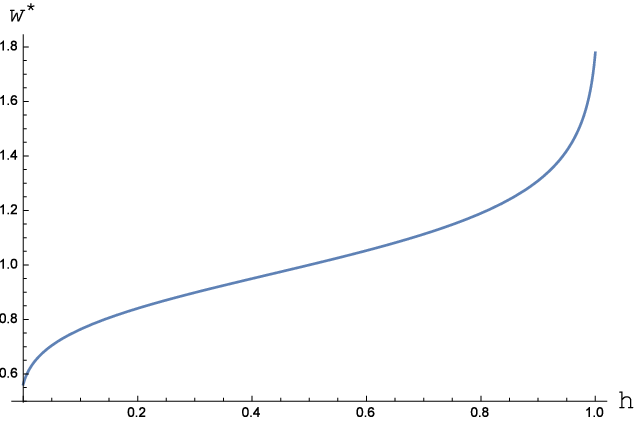} 
\par\end{centering}
\caption{Short-run equilibrium relative wage as a function of the consumers
in region $L$. We set $\sigma=2$ and $\phi\in\{0.1,0.5,0.7\}$.}
\label{fig 1} 
\end{figure}

Figure \ref{fig 1} illustrates $w(h)$ for $h\in[0,1]$ with parameter
values $\sigma=2$ and $\phi\in\{0.1,0.5,0.7\}$. The following result
corroborates this illustration and discusses the impact of trade barriers
on wages. 
\begin{lem}
\label{lem:The-relative-nominal}The relative nominal wage $w$ is
strictly increasing in $h$. When region $L$ is larger (smaller)
than region $R$, the wage $w$ is decreasing (increasing) in $\phi$. 
\end{lem}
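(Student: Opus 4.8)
The plan is to read the short-run equilibrium relation~\eqref{eq:short run equilibrium} as defining $h$ as an \emph{explicit} function of the relative wage $w$, show that this function is strictly increasing, and then recover strict monotonicity of $w$ in $h$ by inversion (this is also what makes the Implicit Function Theorem applicable, cf. Appendix~A). Write $h(w)=A/(A+B)$ with $A\equiv w^{\sigma}-\phi$ and $B\equiv w^{1-\sigma}-w\phi$. The first preparatory step is to pin down the admissible range of $w$: since $h\in[0,1]$ together with $A+B\neq 0$ forces $A\ge 0$ and $B\ge 0$ (and hence $A+B>0$), one gets $w^{\sigma}\in[\phi,\phi^{-1}]$, i.e. $w\in[\phi^{1/\sigma},\phi^{-1/\sigma}]$ — consistent with the boundary values $w(0)=\phi^{1/\sigma}$, $w(1/2)=1$, $w(1)=\phi^{-1/\sigma}$ noted above.

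For the first assertion I would differentiate, $h'(w)=(A'B-AB')/(A+B)^{2}$, and simplify the numerator to the compact form $(2\sigma-1)-(\sigma-1)\phi\,(w^{\sigma}+w^{-\sigma})-\phi^{2}$. On the admissible range, writing $y=w^{\sigma}\in[\phi,\phi^{-1}]$, the map $y\mapsto y+1/y$ is convex and hence attains its maximum at the endpoints, so $w^{\sigma}+w^{-\sigma}\le\phi+\phi^{-1}$; feeding this bound in shows the numerator is at least $\sigma(1-\phi^{2})>0$. Therefore $h'(w)>0$ everywhere on $[\phi^{1/\sigma},\phi^{-1/\sigma}]$, so $h$ is a strictly increasing bijection onto $[0,1]$ and its inverse $w(h)$ is strictly increasing.

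For the comparative static in $\phi$, I would differentiate the identity $h=A/(A+B)$ with $h$ held fixed and $w=w(\phi)$, obtaining $\partial w/\partial\phi=-(\partial h/\partial\phi)/(\partial h/\partial w)$, where the denominator $\partial h/\partial w=h'(w)>0$ is exactly what was just proved. A short computation (using $A_{\phi}=-1$, $B_{\phi}=-w$) reduces the numerator to $\partial h/\partial\phi=w^{1-\sigma}(w^{2\sigma}-1)/(A+B)^{2}$, whose sign is that of $w-1$. Hence $\partial w/\partial\phi$ has the sign of $1-w$: it is negative for $w>1$ and positive for $w<1$. Since $w(h)$ is strictly increasing with $w(1/2)=1$, we have $w>1\iff h>1/2$ and $w<1\iff h<1/2$, which is precisely the stated dichotomy (region $L$ larger $\Rightarrow$ $w$ decreasing in $\phi$; region $L$ smaller $\Rightarrow$ $w$ increasing in $\phi$).

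The only real work here is algebraic: collapsing $A'B-AB'$ to $(2\sigma-1)-(\sigma-1)\phi(w^{\sigma}+w^{-\sigma})-\phi^{2}$ and $\partial h/\partial\phi$ to $w^{1-\sigma}(w^{2\sigma}-1)/(A+B)^{2}$ — once these identities are in hand the sign analysis is immediate. The one conceptual point requiring care is the bound $w^{\sigma}+w^{-\sigma}\le\phi+\phi^{-1}$, which is where restricting to the admissible range $w\in[\phi^{1/\sigma},\phi^{-1/\sigma}]$ and the convexity of $y+1/y$ are essential; off that range $h'(w)$ need not be positive.
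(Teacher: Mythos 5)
Your proof is correct and follows essentially the same route as the paper's Appendix A: both read \eqref{eq:short run equilibrium} as an explicit map $h(w)$, restrict to the admissible range $w^{\sigma}\in[\phi,\phi^{-1}]$, establish $dh/dw>0$ there by checking positivity of the numerator at the endpoints of that range (your convexity bound $w^{\sigma}+w^{-\sigma}\le\phi+\phi^{-1}$ is equivalent to the paper's concavity argument for its quadratic $G(w^{\sigma})$, since your numerator is exactly $G(w^{\sigma})/w^{\sigma}$), and then obtain the sign of $dw/d\phi$ via implicit differentiation from $\partial h/\partial\phi\propto w^{2\sigma}-1$. The only presentational difference is that you derive the admissible range directly from $h\in[0,1]$ (implicitly using that $A<0$ and $B<0$ cannot hold simultaneously because $\phi<1$), which is slightly more direct than the paper's limit-based exclusion of $w\in(0,\phi^{1/\sigma})$ and $w\in(\phi^{-1/\sigma},+\infty)$.
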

\begin{proof}
See Appendix A. 
\end{proof}
We conclude that higher trade barriers increase the wage divergence
between the regions. The intuition is as follows. When all workers
are mobile, they can move to the region that offers them a relatively
better access to varieties. This advantage of the larger region becomes
higher as transport costs increase because markets become more focused
on local demand. 
This increases expenditures in the larger region relative to the smaller,
which in turn pushes the nominal relative wage upwards.

We consider a general isoelastic sub-utility for consumption goods:
\begin{equation}
\begin{cases}
u_{i}=\dfrac{C_{i}^{1-\theta}-1}{1-\theta}, & \text{ if }\theta\in[0,1)\cup(1,+\infty)\\
u_{i}=\ln(C_{i}), & \text{ if }\theta=1,
\end{cases},\label{eq:isoelastic utility}
\end{equation}
where for $\theta=1$ we take the limit value of the upper expression
of $u_{i}$. The parameter $\theta$ is a positive agglomeration externality.
It influences how a change in the regional consumption differential,
$C_{L}-C_{R}$, impacts the regional utility differential, $u_{L}-u_{R}$,
i.e., it influences the strength of the self-reinforcing agglomeration
mechanism when one region is more populated than the other. The specification
in (\ref{eq:isoelastic utility}) is quite general and encompasses
the Murata (2003) model as a particular case when $\theta=0$, which
shall be of great interest for comparison purposes in Section 4.\footnote{We note in advance that the qualitative results of the model obtained
in subsequent sections do not depend on $\theta$ (c.f., Section 3.2).}

The utility differential from consumption goods, $\Delta u\equiv u_{L}-u_{R},$
is given by: 
\begin{equation}
\Delta u=\begin{cases}
\dfrac{\eta}{1-\theta}\left\{ w^{1-\theta}\left[(1-h)\phi+hw^{1-\sigma}\right]^{\tfrac{1-\theta}{\sigma-1}}-\left[1-h+h\phi w^{1-\sigma}\right]^{\tfrac{1-\theta}{\sigma-1}}\right\} , & \ \text{if }\mbox{\ensuremath{\theta\neq1}}\\
\ln w+\dfrac{1}{\sigma-1}\ln\left[\frac{hw^{1-\sigma}+(1-h)\phi}{h\phi w^{1-\sigma}+(1-h)}\right], & \ \text{if }\theta=1.
\end{cases}\label{eq:utility differential-1}
\end{equation}
We adopt the normalizations by Fujita \emph{et al}. (1999), i.e.,
$\alpha\sigma=1$ so that the number of consumers in a region equals
its number of firms. Moreover, we assume that $(\sigma-1)/(\sigma\beta)=1$
so that the price of each manufactured variety in a region equals
its workers' nominal wage. These imply that $\eta=1$.\footnote{This choice is made mainly for convenience, as it allows us to abstract
from changes in the variable input requirement $\beta$.}

\begin{figure}[H]
\begin{centering}
\includegraphics[scale=0.7]{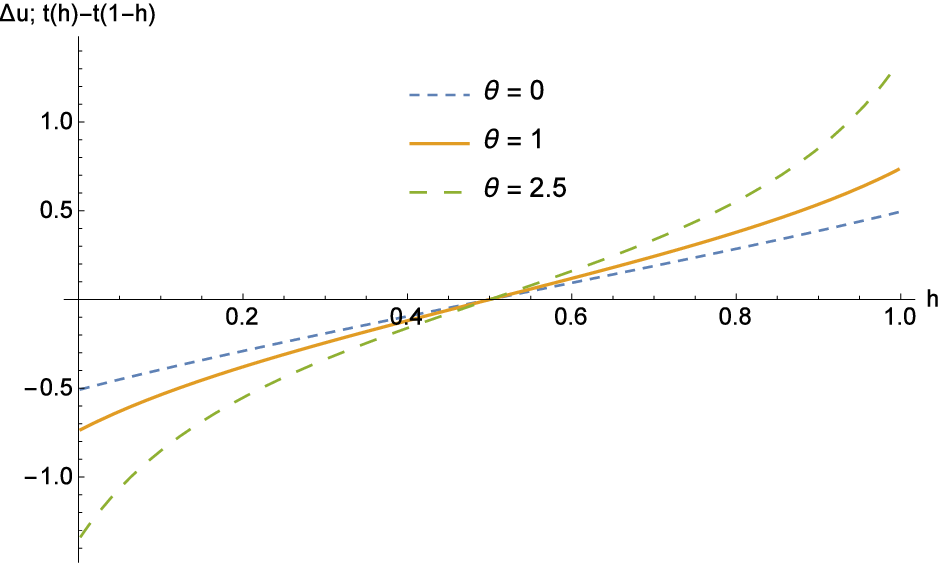}
\par\end{centering}
\caption{Utility differential $\Delta u=u_{L}-u_{R}$ for different levels
of $\theta$. Parameter values are $\sigma=2$ and $\phi=0.5$.}
\label{fig 2}
\end{figure}
Figure \ref{fig 2} depicts $\Delta u$ in (\ref{eq:utility differential-1})
for different levels of $\theta$. We observe that a higher $\theta$
increases the utility differential $\Delta u$ for any spatial distribution
$h>1/2$. Therefore, if $L$ is relatively more industrialized, a
higher $\theta$ increases the attractiveness of $L$ relative to
$R$ for consumers. Thus, it strengthens the agglomeration forces
towards region $L$. 

\section{Long-run equilibria}

In a long-run spatial equilibrium, each worker chooses to live in
the region that provides a higher utility. We follow Castro et al.
(2022) and assume that $t(x)$ enters additively in overall utility,
$U_{i}(x)$, so that the utility penalty, and hence heterogeneity
in preferences for location, are modelled \emph{à la }Hotelling (1929).\footnote{It could be interesting to consider the case where $t(x)$ is multiplicative
(e.g. as a percentage of $U_{i}(x)$). However, the non-linear impact
of $t(x)$ would likely make analytical results harder to obtain.} In a long-run equilibrium, workers with $x\in[0,h)$ live in region
$L$, and workers with $x\in(h,1]$ live in region $R$. Knowing from
utility maximization that $C_{L}=w/P_{L}$ and $C_{R}=1/P_{R}$, we
rewrite the indirect utilities of the consumer with $x=h$ as: 
\begin{align}
V_{L}(h) & =\dfrac{C_{L}^{1-\theta}-1}{1-\theta}-t(h)\nonumber \\
V_{R}(h) & =\dfrac{C_{R}^{1-\theta}-1}{1-\theta}-t(1-h),\label{eq: indirect utility}
\end{align}
where $h$ satisfies the short-run equilibrium in (\ref{eq:short run equilibrium}).

\subsection{Interior equilibria}

Due to symmetry, we focus only on the case where $L$ is larger or
the same size as $R$, i.e., $h\geq1/2$. We define an interior equilibrium
$h^{*}\in[1/2,1)$ as a spatial distribution $h$ that satisfies both
(\ref{eq:short run equilibrium}) and $V_{L}=V_{R}$. Such an equilibrium
is said to be stable if $d\left(V_{L}-V_{R}\right)/dh<0$. This leads
to the following Lemma. 
\begin{lem}
\label{prop:stability of partial agglomeration}An interior equilibrium
$h^{*}\in[1/2,1)$ is stable if:{\footnotesize{} 
\[
\zeta\left\{ \left(\varphi w^{\sigma\frac{1-\theta}{\sigma-1}}+\frac{\psi}{w}\right)\left\{ \frac{\left(1-\phi^{2}\right)w}{w^{2\sigma}-\left[\varphi(w+1)w^{\sigma}\right]+w}\right\} ^{\frac{1-\theta}{\sigma-1}}\right\} <t^{\prime}\left(h^{*}\right)-t^{\prime}\left(1-h^{*}\right),
\]
}where $h^{*}$ satisfies the short-run equilibrium condition in (\ref{eq:short run equilibrium}),
and:{\small{} 
\begin{align*}
\zeta & =\dfrac{w^{2\sigma}-\left[\phi(w+1)w^{\sigma}\right]+w}{(\sigma-1)\left[(\sigma-1)\phi+(\sigma-1)\phi w^{2\sigma}+\left(-2\sigma+\phi^{2}+1\right)w^{\sigma}\right]};\\
\varphi & =w^{-\theta-\sigma}\left\{ \phi\left[\sigma+(\sigma-1)w\right]w^{\sigma}-2\sigma w+w\right\} ;\\
\psi & =(\sigma-1)\phi+(1-2\sigma)w^{\sigma}+\sigma\phi w.
\end{align*}
}{\small\par}

\end{lem}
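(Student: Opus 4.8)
The plan is to compute the derivative $d(V_L - V_R)/dh$ explicitly and show that the stated inequality is equivalent to the stability condition $d(V_L-V_R)/dh < 0$. Writing $\Delta V(h) = V_L(h) - V_R(h)$, from (\ref{eq: indirect utility}) we have $\Delta V = \Delta u(h) - t(h) + t(1-h)$, where $\Delta u$ is the consumption utility differential in (\ref{eq:utility differential-1}), evaluated along the short-run equilibrium locus $w = w(h)$ defined implicitly by (\ref{eq:short run equilibrium}). Differentiating, $\frac{d\Delta V}{dh} = \frac{d\Delta u}{dh} - t'(h) - t'(1-h)$, so the condition $\frac{d\Delta V}{dh} < 0$ rearranges to $\frac{d\Delta u}{dh} + t'(1-h) - t'(h) < 0$, i.e. $\frac{d\Delta u}{dh} < t'(h) - t'(1-h)$. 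Thus the entire task is to verify that the left-hand side of the displayed inequality, the messy expression $\zeta\{(\varphi w^{\sigma(1-\theta)/(\sigma-1)} + \psi/w)\{\cdots\}^{(1-\theta)/(\sigma-1)}\}$, equals $\frac{d\Delta u}{dh}$.

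To carry this out I would proceed as follows. First, differentiate the short-run equilibrium relation (\ref{eq:short run equilibrium}) implicitly to obtain $dw/dh$ as a rational function of $w$ and $\phi$; one can rewrite (\ref{eq:short run equilibrium}) as $h\left[w^\sigma - \phi + w(w^{-\sigma} - \phi)\right] = w^\sigma - \phi$, i.e. $h\left[w^\sigma - \phi + w^{1-\sigma} - \phi w\right] = w^\sigma - \phi$, and differentiate both sides. This yields $dw/dh = $ (denominator of (\ref{eq:short run equilibrium})) divided by $h$ times the derivative of the bracket, which after simplification should produce the factor $\zeta$ together with the denominator $w^{2\sigma} - \phi(w+1)w^\sigma + w$ appearing inside the braces. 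Second, treat $\Delta u$ from (\ref{eq:utility differential-1}) (with $\eta = 1$) as a function of both $h$ and $w$, and compute $\frac{d\Delta u}{dh} = \frac{\partial \Delta u}{\partial h} + \frac{\partial \Delta u}{\partial w}\frac{dw}{dh}$. The partial with respect to $h$ acts on the terms $(1-h)\phi + hw^{1-\sigma}$ and $1 - h + h\phi w^{1-\sigma}$ inside the power, while the partial with respect to $w$ acts both on the prefactor $w^{1-\theta}$ and on those same bracketed terms. Third, substitute $dw/dh$ and collect terms; the common power $\left[\cdots\right]^{(1-\theta)/(\sigma-1)}$ factors out (after using (\ref{eq:short run equilibrium}) to rewrite $(1-h)\phi + hw^{1-\sigma}$ and $1-h+h\phi w^{1-\sigma}$ in terms of $w$ only — this is where the consolidated form inside the inner braces, $(1-\phi^2)w / (w^{2\sigma} - \varphi(w+1)w^\sigma + w)$, should emerge), and the remaining algebraic coefficient should organize into $\zeta(\varphi w^{\sigma(1-\theta)/(\sigma-1)} + \psi/w)$ with $\varphi$ and $\psi$ as defined.

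The main obstacle I anticipate is purely the algebraic bookkeeping in the third step: one must use the short-run equilibrium constraint (\ref{eq:short run equilibrium}) to eliminate $h$ in favor of $w$ at exactly the right moments so that the two bracketed demand-type expressions collapse to the single ratio shown, and one must verify that the auxiliary quantities $\varphi$ and $\psi$ — which mix powers $w^\sigma$, $w^{-\theta-\sigma}$, and linear terms in $w$ — are precisely the coefficients that fall out after combining $\partial_h \Delta u$ with $\partial_w \Delta u \cdot (dw/dh)$. It is worth noting that the expression inside the inner braces is written with $\phi$ in the numerator $(1-\phi^2)$ but with $\varphi$ (the auxiliary function, not $\phi$) in the denominator, and reconciling these two forms — i.e. checking that $w^{2\sigma} - \phi(w+1)w^\sigma + w$ and $w^{2\sigma} - \varphi(w+1)w^\sigma + w$ agree under the equilibrium relation, or that one is a typo for the other — is a subtlety to watch for. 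Once $d\Delta u/dh$ is confirmed to match the left-hand side, the conclusion is immediate from $\frac{d\Delta V}{dh} = \frac{d\Delta u}{dh} - t'(h) - t'(1-h) < 0$. A cleaner write-up might instead verify the identity at $\theta = 0$ first (the Murata case, where $\Delta u$ is simplest) and then note that the $\theta$-dependence enters only through the overall power $(1-\theta)/(\sigma-1)$ and the factor $w^{-\theta}$ in $\varphi$, so the general case follows by the same manipulation.
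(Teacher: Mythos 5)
Your proposal follows essentially the same route as the paper's own proof: stability is read as $d\Delta V/dh<0$ at $h^{*}$, the consumption differential $\Delta u$ in (\ref{eq:utility differential-1}) is differentiated along the short-run wage locus via $d\Delta u/dh=\partial_{h}\Delta u+\partial_{w}\Delta u\cdot(dw/dh)$ with $dw/dh$ obtained by implicit differentiation of (\ref{eq:short run equilibrium}) (this is exactly (\ref{eq:derivative of nominal wage}) in Appendix A), and the equilibrium condition is used to eliminate $h$ so the coefficient collapses to $\zeta\left(\varphi w^{\sigma\frac{1-\theta}{\sigma-1}}+\psi/w\right)$ times the common power; your suspicion that the $\varphi$ inside the inner braces is a typo for $\phi$ is also well founded. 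The one slip is in your rearrangement: from $d\Delta V/dh=d\Delta u/dh-t^{\prime}(h)-t^{\prime}(1-h)<0$ one gets $d\Delta u/dh<t^{\prime}(h)+t^{\prime}(1-h)$, not $t^{\prime}(h)-t^{\prime}(1-h)$ --- the paper's proof itself states the condition as $d\Delta u/dh<\frac{d}{dh}\left[t(h)-t(1-h)\right]=t^{\prime}(h)+t^{\prime}(1-h)$, and Appendix C evaluates the quantity written ``$t^{\prime}(h^{*})-t^{\prime}(1-h^{*})$'' as $\mu/\left[h(1-h)\right]$, which is $t^{\prime}(h)+t^{\prime}(1-h)$ for the logit penalty, so the lemma's right-hand side should be read as that total derivative rather than forcing your algebra to match its literal form.
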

\begin{proof}
See Appendix B. 
\end{proof}
\noindent An interior equilibrium is called \emph{symmetric dispersion
}if $h^{*}=1/2$ and \emph{partial agglomeration }otherwise. While
the former always exists, the latter depends on the form of $t(x)$.
Knowing that $w=1$ for $h=1/2$ the stability condition in Lemma
\ref{prop:stability of partial agglomeration} for symmetric dispersion
simplifies to:

\begin{equation}
\dfrac{d\Delta u}{dh}\left(\tfrac{1}{2}\right)\equiv\frac{2(2\sigma-1)(1-\phi)\left(\frac{1+\phi}{2}\right)^{\frac{1-\theta}{\sigma-1}}}{(\sigma-1)(2\sigma+\phi-1)}<t^{\prime}\left(\tfrac{1}{2}\right).\label{eq:stability symmetric dispersion}
\end{equation}

\noindent As expected, symmetric dispersion is stable if, after a
small increase in $h$, the relative utility gain from consumption
is lower than the increase in the utility penalty for the agent of
type $x=h$. A careful inspection of (\ref{eq:stability symmetric dispersion})
allows us to conclude that the LHS is decreasing in $\phi$ if $\theta\geq1$.
For $\theta<1$, it is decreasing in $\phi$ if $\sigma\geq1+\sqrt{2}/2\approx1.71$.
Symmetric dispersion then becomes easier to sustain under lower transportation
costs if $\sigma>1+\sqrt{2}/2$, which, according to recent empirical
estimations for $\sigma$, is more than reasonable.\footnote{Estimations evidence that $\sigma$ should be significantly larger
than unity (Crozet, 2004; Head and Mayer, 2004; Niebuhr, 2006; Bosker
\emph{et al., }2010). Anderson and Wincoop (2004), for instance, find
that it is likely to range from 5 to 10.} As $\phi\rightarrow1$, symmetric dispersion is always stable because
$t^{\prime}\left(\tfrac{1}{2}\right)>0$. This warrants the following
result.
\begin{prop}
\textup{\label{prop:As--increases}As $\phi$ increases from a low
level, symmetric dispersion either remains stable or becomes stable,
if either conditions hold: }\textup{\emph{(i).}}\textup{ $\theta\in[0,1)$
and $\sigma>1+\sqrt{2}/2$; or }\textup{\emph{(ii).}}\textup{ $\theta\geq1$. }
\end{prop}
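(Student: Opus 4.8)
The plan is to work directly from the stability inequality (\ref{eq:stability symmetric dispersion}) for symmetric dispersion and analyze how its left-hand side behaves as a function of $\phi$ on $(0,1)$. Write $g(\phi)$ for the LHS, i.e.
\[
g(\phi)=\frac{2(2\sigma-1)(1-\phi)\left(\frac{1+\phi}{2}\right)^{\frac{1-\theta}{\sigma-1}}}{(\sigma-1)(2\sigma+\phi-1)}.
\]
Symmetric dispersion is stable precisely when $g(\phi)<t'(1/2)$. Since $t'(1/2)$ is a fixed positive constant independent of $\phi$, the proposition follows if I can show that $g$ is (weakly) decreasing in $\phi$ under each of the two stated parameter regimes, together with the boundary fact that $g(\phi)\to 0$ as $\phi\to 1$ (because the factor $(1-\phi)$ vanishes and the remaining factors stay bounded), which is already noted in the text. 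Indeed, if $g$ is nonincreasing, then as $\phi$ rises the inequality $g(\phi)<t'(1/2)$ can only become easier to satisfy: once true, it stays true, and if false at a low $\phi$ it will eventually hold since $g\to 0$. That is exactly the claim ``remains stable or becomes stable.''

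So the core task is the monotonicity of $g$. First I would take logarithms: $\ln g(\phi)=\text{const}+\ln(1-\phi)+\frac{1-\theta}{\sigma-1}\ln\!\left(\frac{1+\phi}{2}\right)-\ln(2\sigma-1+\phi)$, and differentiate to get
\[
\frac{g'(\phi)}{g(\phi)}=-\frac{1}{1-\phi}+\frac{1-\theta}{\sigma-1}\cdot\frac{1}{1+\phi}-\frac{1}{2\sigma-1+\phi}.
\]
For case (ii), $\theta\ge 1$, the middle term is nonpositive while the other two are strictly negative, so $g'(\phi)<0$ on $(0,1)$ immediately; this case is essentially free. For case (i), $\theta\in[0,1)$, the middle term is positive, so I must show
\[
\frac{1-\theta}{\sigma-1}\cdot\frac{1}{1+\phi}\le \frac{1}{1-\phi}+\frac{1}{2\sigma-1+\phi}
\]
for all $\phi\in(0,1)$. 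Since $0\le\frac{1-\theta}{\sigma-1}<\frac{1}{\sigma-1}$, it suffices to prove $\frac{1}{(\sigma-1)(1+\phi)}\le\frac{1}{1-\phi}+\frac{1}{2\sigma-1+\phi}$, and the worst case is $\phi\to 0$ (the LHS is decreasing and the first term on the RHS is increasing in $\phi$; I would check the second term contributes favorably too, or simply bound it below by its value at $\phi=0$). At $\phi=0$ this reduces to $\frac{1}{\sigma-1}\le 1+\frac{1}{2\sigma-1}=\frac{2\sigma}{2\sigma-1}$, i.e. $2\sigma-1\le 2\sigma(\sigma-1)$, i.e. $2\sigma^2-4\sigma+1\ge 0$, whose larger root is $\sigma=1+\sqrt{2}/2$. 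Hence for $\sigma>1+\sqrt{2}/2$ the inequality holds at $\phi=0$ with slack, and a short monotonicity-in-$\phi$ argument (or direct algebraic comparison clearing denominators) extends it to all $\phi\in(0,1)$, giving $g'(\phi)<0$ there as well.

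The main obstacle is case (i): unlike case (ii), the sign of $g'$ is not automatic, and one has to be careful that the threshold $\sigma>1+\sqrt 2/2$ is exactly what makes the inequality $\frac{1-\theta}{\sigma-1}\cdot\frac{1}{1+\phi}\le\frac{1}{1-\phi}+\frac{1}{2\sigma-1+\phi}$ hold uniformly in $\phi$ (and in $\theta\in[0,1)$). I would handle this by reducing to the $\theta=0$, $\phi=0$ corner as above — monotonicity in $\theta$ is trivial since the troublesome term is increasing in $\frac{1-\theta}{\sigma-1}$, and monotonicity in $\phi$ needs only elementary estimates — so that the whole case collapses to the quadratic $2\sigma^2-4\sigma+1\ge0$. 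Finally, I would close the argument by combining: in either regime $g$ is nonincreasing with $g(\phi)\to 0$ as $\phi\to 1$, hence the set of $\phi$ for which $g(\phi)<t'(1/2)$ is an interval of the form $(\bar\phi,1)$ (possibly all of $(0,1)$), so increasing $\phi$ from a low level either preserves stability of symmetric dispersion or makes it stable, which is the assertion of the proposition.
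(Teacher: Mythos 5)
Your proposal is correct and follows essentially the same route as the paper: the paper proves this proposition in the text immediately preceding it, by observing that the LHS of (\ref{eq:stability symmetric dispersion}) is decreasing in $\phi$ under conditions (i) or (ii) and tends to $0$ as $\phi\to1$ while $t'(1/2)>0$ is fixed. Your logarithmic-derivative computation simply supplies the "careful inspection" the paper leaves implicit, correctly locating the binding case at $\theta=0$, $\phi=0$, where the quadratic $2\sigma^2-4\sigma+1\ge0$ yields the threshold $\sigma=1+\sqrt{2}/2$.
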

Thus, an increase in $\phi$ can never turn symmetric dispersion from
stable to unstable, suggesting that lower trade barriers always promote
an even distribution of economic activities. Since, at symmetric dispersion,
consumers in each region have access to the same amount of manufactures,
an exogenous migration will induce a lower (higher) benefit from local
consumption goods in the larger market if transport costs are lower
(higher). This is captured by the fact that the relative decrease
in prices and increase in wages (at the symmetric equilibrium) is
more pronounced when transport costs are higher.

Let $\phi=\phi_{b}\in(0,1)$ such that $\frac{d\Delta u}{dh}\left(\tfrac{1}{2};\phi_{b}\right)=t^{\prime}\left(\tfrac{1}{2};\phi_{b}\right)$.
If $\phi_{b}$ exists, then symmetric dispersion is stable for $\phi>\phi_{b}$
and unstable for $\phi<\phi_{b}$. We have the following result.
\begin{lem}
\label{lem:A-curve-of}A curve of partial agglomeration equilibria
$h^{*}\in(1/2,1)$ branches from $\phi=\phi_{b}.$ If it branches
in the direction of $\phi<\phi_{b}$, partial agglomeration is stable.
\end{lem}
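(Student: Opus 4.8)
The plan is to treat this as a local bifurcation analysis of the equation $G(h,\phi) \equiv \Delta u(h,\phi) - [t(h) - t(1-h)] = 0$ near the known solution branch $h = 1/2$, which exists for all $\phi$ by symmetry. First I would observe that $G(1/2,\phi) = 0$ identically (since $\Delta u(1/2,\cdot) = 0$ and $t(1/2) - t(1/2) = 0$), so the point $(h,\phi) = (1/2,\phi_b)$ is a candidate for a pitchfork-type bifurcation off the trivial branch. The defining property of $\phi_b$ is exactly that $\partial G/\partial h\,(1/2,\phi_b) = 0$, i.e.\ the derivative $\frac{d\Delta u}{dh}(1/2;\phi) - [t'(1/2) + t'(1/2)]$ — wait, more precisely $\frac{d}{dh}(t(h) - t(1-h)) = t'(h) + t'(1-h)$, which at $h=1/2$ equals $2t'(1/2)$; combined with the factor-of-2 normalization implicit in \eqref{eq:stability symmetric dispersion}, the condition $\frac{d\Delta u}{dh}(1/2;\phi_b) = t'(1/2;\phi_b)$ is precisely $\partial G/\partial h\,(1/2,\phi_b) = 0$. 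So at $(1/2,\phi_b)$ the linearization in $h$ degenerates, and for $\phi$ slightly away from $\phi_b$ a new solution branch detaches.

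Second, to produce the branch and determine its direction I would Taylor-expand $G$ about $(1/2,\phi_b)$. Writing $h = 1/2 + \epsilon$, symmetry of the model under $h \mapsto 1-h$ (which sends $\epsilon \mapsto -\epsilon$ and fixes $\phi$) forces $G$ to be odd in $\epsilon$: $G(1/2+\epsilon,\phi) = A(\phi)\,\epsilon + B(\phi)\,\epsilon^3 + O(\epsilon^5)$, with $A(\phi_b) = 0$. Hence nontrivial solutions satisfy $A(\phi) + B(\phi_b)\epsilon^2 + \cdots = 0$, giving $\epsilon^2 \approx -A(\phi)/B(\phi_b)$. Since $A'(\phi_b) \neq 0$ (generically — this is where I would need $\frac{d}{d\phi}\big[\frac{d\Delta u}{dh}(1/2;\phi) - t'(1/2)\big] \neq 0$ at $\phi_b$, which follows from the monotonicity-in-$\phi$ analysis behind Proposition \ref{prop:As--increases}, where the LHS of \eqref{eq:stability symmetric dispersion} is shown strictly monotone in $\phi$), the branch exists on exactly one side of $\phi_b$: the side where $-A(\phi)/B(\phi_b) > 0$. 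This establishes the branching claim; the stability claim then follows because on the partial-agglomeration branch $\partial G/\partial h$ evaluated along the branch equals, to leading order, $A(\phi) + 3B(\phi_b)\epsilon^2 = A(\phi) - 3A(\phi) = -2A(\phi)$ (using $\epsilon^2 = -A/B$), so its sign is opposite to that of $A(\phi)$; when the branch lies in the region $\phi < \phi_b$, there $A(\phi) > 0$ (symmetric dispersion is unstable for $\phi<\phi_b$ means $\partial G/\partial h > 0$ at $h=1/2$), hence $\partial G/\partial h < 0$ on the partial-agglomeration branch, i.e.\ it is stable.

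Third, I would make the oddness argument rigorous by invoking the exact symmetry already built into \eqref{eq:short run equilibrium}, \eqref{eq:utility differential-1} and \eqref{eq: indirect utility}: replacing $h$ by $1-h$ swaps the roles of $L$ and $R$, sends $w \mapsto 1/w$, and sends $\Delta u \mapsto -\Delta u$ and $V_L - V_R \mapsto -(V_L - V_R)$, so $G$ is genuinely an odd function of $h - 1/2$. This is the standard setup for an equivariant (pitchfork) bifurcation and one could alternatively cite Liapunov–Schmidt reduction; given the one-dimensional state variable, the direct Taylor expansion above is cleaner and self-contained.

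The main obstacle is showing $B(\phi_b) \neq 0$, i.e.\ that the cubic coefficient in the expansion of $\Delta u - [t(h)-t(1-h)]$ at $h = 1/2$ does not vanish. The $t$-contribution to $B$ is $\tfrac{1}{3}(t'''(1/2) \cdot \text{sign stuff})$ — actually $\frac{d^3}{dh^3}(t(h)-t(1-h))|_{1/2}$ — plus the third-order term of $\Delta u$, and in general these could conspire to cancel; I would therefore state the branching as generic (or add a nondegeneracy hypothesis $B(\phi_b)\neq 0$) and compute $B$ explicitly from \eqref{eq:utility differential-1} for the model's functional form to confirm it is nonzero for the relevant parameter ranges, with the sign of $B(\phi_b)$ determining whether the pitchfork is sub- or supercritical and hence on which side of $\phi_b$ the branch appears. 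The rest of the argument — identifying $A(\phi_b)=0$ with the definition of $\phi_b$, reading off the stability sign from $-2A(\phi)$, and matching "branch toward $\phi<\phi_b$" with "$A>0$ there, so stable" — is routine once the expansion is in hand.
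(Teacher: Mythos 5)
Your proposal is correct and follows essentially the same route as the paper: both identify $(h,\phi)=(1/2,\phi_b)$ as a pitchfork bifurcation point of $\Delta V=0$ using the $h\mapsto 1-h$ symmetry, with the branching direction and stability governed by the sign of the cubic coefficient (the paper's $\partial^{3}\Delta V/\partial h^{3}$, your $B(\phi_b)$), and both leave its nonvanishing as a nondegeneracy condition depending on $t$. The only difference is that the paper cites Guckenheimer--Holmes for the pitchfork normal form where you unpack the Taylor expansion and the $-2A(\phi)$ stability computation by hand, which is a faithful, self-contained rendering of the same argument.
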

\begin{proof}
See Appendix B.
\end{proof}
The result above establishes existence of partial agglomeration in
a neighbourhood of $\phi=\phi_{b}$.\footnote{Uniqueness is not ensured for the entire range of $\phi\in(0,\phi_{b})$
as the model may undergo secondary bifurcations along the primary
branch of partial agglomeration equilbria, such as a saddle-node bifurcation
(see e.g. Gaspar et al. (2018, 2021) and Castro et al. (2021) for
applications in economic geography).} Let us now assume that a partial agglomeration $h^{*}\in(1/2,1)$
equilibrium exists and is stable for some value of $\phi\in(0,1)$.
Regarding the influence of trade integration on partial agglomeration
equilibria, we establish the following result. 
\begin{prop}
\label{prop:As-trade-barriers decrease}If a stable partial agglomeration
$h^{*}\in(1/2,1)$ exists, it becomes more symmetric as trade barriers
decrease. 
\end{prop}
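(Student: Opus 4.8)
The plan is to recast the equilibrium condition as an implicit equation and reduce the comparative static to the sign of one partial derivative. Decreasing trade barriers ($\tau\downarrow$) is the same as increasing the freeness of trade $\phi=\tau^{1-\sigma}$, and for $h^{*}>1/2$ becoming more symmetric means $h^{*}$ moving toward $1/2$; so the claim is $dh^{*}/d\phi<0$. Substituting $w=w(h,\phi)$ from (\ref{eq:short run equilibrium}) everywhere, set $g(h,\phi):=V_{L}-V_{R}=\Delta u(h,\phi)-t(h)+t(1-h)$, a smooth function of $(h,\phi)$ [from (\ref{eq: indirect utility}) and (\ref{eq:utility differential-1})] with $g(h^{*},\phi)=0$. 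By the Implicit Function Theorem,
\[
\frac{dh^{*}}{d\phi}=-\frac{\partial_{\phi}g(h^{*},\phi)}{\partial_{h}g(h^{*},\phi)}.
\]
The denominator is strictly negative, since stability of $h^{*}$ is by definition $d(V_{L}-V_{R})/dh<0$. As the penalty gap $t(h)-t(1-h)$ does not involve $\phi$, the numerator equals $\partial_{\phi}\Delta u$. Hence $\operatorname{sign}(dh^{*}/d\phi)=\operatorname{sign}\bigl(\partial_{\phi}\Delta u(h^{*},\phi)\bigr)$, and the proposition reduces to proving
\[
\partial_{\phi}\Delta u(h,\phi)<0\qquad\text{for }h\in(1/2,1).
\]

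To prove this inequality I would differentiate (\ref{eq:utility differential-1}) with respect to $\phi$, keeping track of the fact that $w$ also moves with $\phi$: implicitly differentiating (\ref{eq:short run equilibrium}) at fixed $h$ gives $\partial w/\partial\phi$, whose sign is negative on $\{h>1/2\}$ by Lemma \ref{lem:The-relative-nominal} (which also supplies $w>1$ there). Substituting this and collecting terms, $\partial_{\phi}\Delta u$ factors into a strictly positive quantity — essentially the positive denominator already used in the proof of Lemma \ref{lem:The-relative-nominal} — times an expression in $(w,\phi,\sigma)$ whose sign is independent of $\theta$ (consistent with the $\theta$-independence of Section 3.2, so that one may as well take $\theta=0$ and $\Delta u=C_{L}-C_{R}$). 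Putting $m:=w^{1-\sigma}\in(0,1)$, which is legitimate precisely because $w>1\iff h>1/2$, turns the residual claim into a one-variable inequality in $m$ for $\phi\in(0,1)$, which one checks directly.

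The economic content is the mechanism already at work in Proposition \ref{prop:As--increases}: a higher $\phi$ erodes the consumption premium of the larger region — fewer varieties need be imported and its prices and wages respond less — so $\Delta u(h,\cdot)$ contracts for every $h>1/2$; since at an interior equilibrium $\Delta u$ must keep equalling the strictly increasing gap $t(h)-t(1-h)$, the crossing $h^{*}$ is driven toward $1/2$. I expect the last sign check to be the main obstacle. Unlike the symmetric state $h=1/2$, where $\Delta u\equiv0$ and only its slope is at issue (as in (\ref{eq:stability symmetric dispersion})), here $\partial_{\phi}\Delta u$ must be signed at an arbitrary interior $h$, so the algebra does not telescope as cleanly; reusing the positivity from Lemma \ref{lem:The-relative-nominal} together with the substitution $m=w^{1-\sigma}$ is what is meant to keep it under control.
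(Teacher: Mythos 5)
Your reduction is sound and is essentially the paper's: decreasing trade barriers means increasing $\phi$, and via the implicit function theorem applied to $g(h,\phi)=\Delta u-t(h)+t(1-h)=0$, with $\partial_{h}g<0$ by stability and the penalty gap independent of $\phi$, the claim collapses to $d\Delta u/d\phi<0$ at the equilibrium (the paper gets this same reduction by citing Castro et al.\ 2022, Prop.\ 9; your IFT version is a legitimate self-contained substitute). You also correctly note that the derivative must be taken totally, letting $w(h,\phi)$ adjust through the short-run equilibrium condition.

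The gap is in the step you defer: signing $d\Delta u/d\phi$ at an arbitrary interior $h$. Your claim that the expression ``factors into a strictly positive quantity times an expression in $(w,\phi,\sigma)$ whose sign is independent of $\theta$,'' so that one may set $\theta=0$, is not correct. After factoring out the common positive power, what remains is of the form $a_{1}w^{-\sigma\frac{\theta+\sigma-2}{\sigma-1}}+a_{2}$ with $a_{1}>0$ (itself carrying a factor $w^{1-\theta}$) and $a_{2}<0$; the $\theta$-dependence sits in the exponent of $w$ multiplying the \emph{positive} term, so the sign of the sum is a genuine competition between a $\theta$-dependent positive term and a $\theta$-free negative one, and cannot be read off at $\theta=0$. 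The footnote about $\theta$-invariance of the qualitative results is a consequence of this proposition, not a licence to assume it. The paper closes this exactly where your proposal is silent: it shows the relevant quantity $\Psi$ is \emph{increasing} in $\theta$ and has a strictly negative limit as $\theta\rightarrow+\infty$, which forces $\Psi<0$ for all $\theta\geq0$. Separately, even at fixed $\theta$ the residual inequality involves $w$ (or $m=w^{1-\sigma}$), $\phi$ \emph{and} $\sigma$, so it is not the one-variable check in $m$ you describe; you would still need to sign it over the whole region $1<w^{\sigma}<\phi^{-1}$, $\phi\in(0,1)$, $\sigma>1$. As written, the proposal establishes the framing but not the inequality that carries the result.
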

\begin{proof}
See Appendix B. 
\end{proof}
This result states that if most consumers reside in region $L$, increasing
the freeness of trade will lead to a smooth exodus from region $L$
to region $R$, irrespective of the value of $t(h^{*})$. We conclude
from Propositions \ref{prop:As--increases} and \ref{prop:As-trade-barriers decrease}
that more integration leads agents to distribute more equally among
the two regions, as in Helpman (1998), who also considers that all
labour is inter-regionally mobile. This qualitative effect of $\phi$
does \emph{not} depend on consumer heterogeneity. It thus contrasts
the findings that consumer heterogeneity leads to a bell-shaped relationship
between decreasing transport costs and agglomeration (Murata, 2003;
Tabuchi and Thisse, 2002).

\subsection{Logit preferences for location}

We begin this Section by noting that, for any value of $\theta$,
there exists a non-trivial subset of a suitably defined parameter
space over which the qualitative results of the model remain invariant.
Therefore, for the remainder of this Section, we will assume that
utility is logarithmic in consumption. This greatly simplifies the
analysis and entails no loss of generality.
\begin{assumption}
\label{assu:Let-.}Utility $u_{i}$ is logarithmic in $C_{i}$: $\theta=1\implies u_{i}=\ln(C_{i})$.
\end{assumption}
For the two region case, and following Tabuchi and Thisse (2002) and
Murata (2003), we assume that the probability that a consumer will
choose to reside in region $L$ is given by the Logit model, written
as: 
\begin{equation}
\Phi_{L}(h)=\dfrac{e^{u_{L}(h)/\mu}}{e^{u_{L}(h)/\mu}+e^{u_{R}(h)/\mu}},\label{eq:binary logit expression}
\end{equation}
where $\mu\geq0$ is a scale parameter which measures the dispersion
of consumer preferences. If $\mu=0$, consumers do not care about
their location preferences but rather solely about relative wages.
The number of agents $h$ is the same as the consumer $x$ who is
indifferent between living in region $L$ or in region $R$. Therefore,
the probability $\Phi_{L}(h)$ is tantamount to the indifferent consumer
$x=h$ (as in Castro et al., 2022). Thus, using (\ref{eq:binary logit expression}),
we can write: 
\begin{equation}
h=\dfrac{e^{u_{L}(h)/\mu}}{e^{u_{L}(h)/\mu}+e^{u_{R}(h)/\mu}}.\label{eq:indifferent logit}
\end{equation}
Manipulating (\ref{eq:indifferent logit}) yields: 
\begin{equation}
u_{L}+\mu\ln(1-h)=u_{R}+\mu\ln(h),\label{eq:utility equalization under logit}
\end{equation}
which rearranged yields the long-run equilibrium condition, $\Delta u(h)=\Delta t(h),$
with $t(x)=-\mu\ln(1-x).$ This yields $\Delta t(h)=\mu\left[\ln h-\ln(1-h)\right]$,
dubbed by Castro et al. (2022) as the ``home-sweet-home'' effect,
which comprises the model's unique dispersion force generated by consumer
heterogeneity in preferences for location.

Notice from (\ref{eq:utility equalization under logit}) that $t(h)=-\mu\ln(1-h)>0$
for $h\in[0,1]$. The overall utility of a consumer in region $i$
thus lies on the interval $(-\infty,u_{i}]$. For a strictly positive
$\mu$, the consumer who likes region $R$ the most (at $x=1$) will
never want to live in $L$, because, at $h=1$, the overall utility
loss from living in region $L$ is unbounded. This means that the
Logit model penalizes (benefits) the consumers who are less (more)
willing to leave a region very strongly. Therefore, agglomeration
in any region ($h\in\{0,1\})$ is not possible. Now, suppose condition
(\ref{eq:stability symmetric dispersion}) does not hold and symmetric
dispersion is unstable. Then, we have $\frac{d\Delta u}{dh}\left(\tfrac{1}{2}\right)>t^{\prime}\left(\tfrac{1}{2}\right)$
which means that $\Delta u(h)>t(h)-t(h)$ for $h=\tfrac{1}{2}+\epsilon,$
with $\epsilon>0$ small enough. Since $\Delta u\left(1\right)-\Delta t(1)\rightarrow-\infty$,
then, by the Intermediate Value Theorem, we must have $\Delta V=0$
for some $h\in(1/2,1)$. Accordingly, a partial agglomeration equilibrium
$h^{*}\in(1/2,1)$ must always exist when symmetric dispersion is
unstable.

The discussion above allows us to formalize our next result, which
summarizes the possible spatial outcomes under Logit type preferences,
depending on the degree of heterogeneity $\mu.$
\begin{prop}
\label{prop:Partial agglomeration logit model}Under Logit type preferences
and Assumption \ref{assu:Let-.}, the spatial distribution depends
on the level of consumer heterogeneity as follows: 
\end{prop}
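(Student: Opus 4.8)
The plan is to read all three regimes off the symmetric-dispersion stability condition (\ref{eq:stability symmetric dispersion}), specialised to $\theta=1$ and to the Logit penalty $t(x)=-\mu\ln(1-x)$. From this penalty, $t'(x)=\mu/(1-x)$, so $t'(1/2)=2\mu$; and since the factor $\bigl(\tfrac{1+\phi}{2}\bigr)^{(1-\theta)/(\sigma-1)}$ in (\ref{eq:stability symmetric dispersion}) equals $1$ at $\theta=1$, that condition reads $\mu>\mu_{b}$, where
\[
\mu_{b}:=\frac{(2\sigma-1)(1-\phi)}{(\sigma-1)(2\sigma+\phi-1)}\in\Bigl(0,\tfrac{1}{\sigma-1}\Bigr).
\]
A one-line computation (equivalently, Proposition \ref{prop:As--increases} at $\theta=1$) shows $\mu_{b}$ is strictly decreasing in $\phi$: a larger freeness of trade lowers the bar that $\mu$ must clear. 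This single threshold organises the statement, with $\mu=\mu_{b}$ a bifurcation value at which $h=1/2$ is non-hyperbolic.

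Next I would treat $\mu>\mu_{b}$. By (\ref{eq:stability symmetric dispersion}) symmetric dispersion $h=1/2$ is then a stable interior equilibrium (the inequality holding strictly). Full agglomeration is excluded exactly as in the discussion preceding the statement: writing $\Delta V(h):=V_{L}(h)-V_{R}(h)=\Delta u(h)-\Delta t(h)$, one has $t(h)=-\mu\ln(1-h)\to+\infty$ and $t(1-h)=-\mu\ln h\to0$ as $h\to1$, while $\Delta u(1)$ is finite, so $\Delta V(h)\to-\infty$ and $h=1$ --- and, by symmetry, $h=0$ --- is not even an equilibrium. If the statement further claims that symmetric dispersion is the \emph{unique} stable outcome in this regime, one must also exclude zeros of $\Delta V$ in $(1/2,1)$; this is the delicate point, addressed below.

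For $\mu<\mu_{b}$, condition (\ref{eq:stability symmetric dispersion}) fails, so $h=1/2$ is unstable; since $\Delta V(1/2)=0$ (symmetric dispersion is always an equilibrium) and $\tfrac{d\Delta V}{dh}(1/2)>0$, we get $\Delta V(h)>0$ for $h=\tfrac12+\epsilon$ with $\epsilon>0$ small. Together with $\Delta V(h)\to-\infty$ as $h\to1$, the Intermediate Value Theorem yields a zero $h^{*}\in(1/2,1)$ of $\Delta V$ (the existence half of this is already in the text); taking the smallest such zero forces $\tfrac{d\Delta V}{dh}(h^{*})\le0$, and generically $<0$, so $h^{*}$ is a stable partial agglomeration equilibrium --- equivalently, the adjustment of $h$ points away from $1/2$ and away from $1$, hence is trapped in $(1/2,1)$ and must settle at a stable rest point there. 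This is Lemma \ref{lem:A-curve-of} in the region $\mu<\mu_{b}$ (equivalently $\phi<\phi_{b}$), made unconditional by the boundary blow-up of the Logit penalty, which forces the branch to persist all the way to $h=1$; non-existence of full agglomeration is again immediate.

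The main obstacle is the uniqueness claim for $\mu>\mu_{b}$, i.e.\ excluding coexisting stable partial-agglomeration equilibria: $h\mapsto\Delta u(h)-\Delta t(h)$ is not obviously single-crossing on $(1/2,1)$, and the footnote after Lemma \ref{lem:A-curve-of} itself warns of possible secondary (saddle-node) bifurcations. I would attack it by proving a single-crossing or log-concavity property of $h\mapsto\Delta u(h)$ directly from the explicit formula in (\ref{eq:utility differential-1}) with $\theta=1$, combined with the monotonicity of $w(h)$ from Lemma \ref{lem:The-relative-nominal}, so that $\Delta u-\Delta t$ changes sign at most once on $(1/2,1)$; alternatively, if the statement only asserts stability of symmetric dispersion together with existence of a stable partial agglomeration when $\mu<\mu_{b}$, this step can be dropped. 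Everything else reduces to substitution into results already established.
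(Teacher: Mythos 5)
Your treatment of the threshold and of existence matches the paper: setting $\theta=1$ and $t(x)=-\mu\ln(1-x)$ in the symmetric-dispersion stability condition gives exactly $\mu_{d}=\frac{(2\sigma-1)(1-\phi)}{(\sigma-1)(2\sigma+\phi-1)}$, and the Intermediate Value Theorem argument using the boundary blow-up of the Logit penalty (so that $\Delta V\to-\infty$ as $h\to1$) is the same one the paper uses to get existence of a partial agglomeration equilibrium when dispersion is unstable. But the proposition asserts more than existence: it claims that for $\mu<\mu_{d}$ the stable partial agglomeration equilibrium is \emph{unique} and the \emph{only} stable equilibrium, and that for $\mu>\mu_{d}$ dispersion is the \emph{only} stable equilibrium. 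You correctly flag this as the delicate point, but you leave it as a plan (``I would attack it by proving a single-crossing or log-concavity property\dots''), and your fallback --- dropping the step if the statement only asserts existence --- is not available, since the statement does assert uniqueness. Your ``smallest zero is generically stable'' step is likewise weaker than the unconditional stability claimed.

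The idea you are missing is the paper's key computation: substituting $\theta=1$ into the general interior-stability condition of Lemma \ref{prop:stability of partial agglomeration} yields an explicit threshold
\[
\mu_{p}(w^{\sigma})=\frac{(2\sigma-1)(w^{\sigma}-\phi)(1-w^{\sigma}\phi)}{(\sigma-1)\left[\left(2\sigma-\phi^{2}-1\right)w^{\sigma}-(\sigma-1)\phi w^{2\sigma}-(\sigma-1)\phi\right]}
\]
such that \emph{any} interior equilibrium (not just $h=1/2$) is stable iff $\mu>\mu_{p}$ evaluated at its short-run wage. The paper then shows $\partial\mu_{p}/\partial(w^{\sigma})<0$, with $\mu_{p}=\mu_{d}$ at $w=1$ and $\mu_{p}\to0$ as $w^{\sigma}\to\phi^{-1}$, so $0<\mu_{p}<\mu_{d}$ at every asymmetric interior equilibrium. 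This is precisely the uniform single-crossing information you said you would need: it implies that in the relevant parameter range all interior equilibria are stable, and since the state space is one-dimensional, two stable equilibria would have to be separated by an unstable one --- a contradiction that rules out coexistence (and hence non-uniqueness) in both regimes; the residual case $\mu<\mu_{p}$ is excluded by the same alternation argument using that full agglomeration is never an equilibrium. Without this (or a worked-out substitute such as your proposed log-concavity of $\Delta u$), the ``only stable equilibrium'' and ``is unique'' clauses of the proposition remain unproved, so the proposal has a genuine gap.
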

\begin{itemize}
\item \emph{Symmetric dispersion is the only stable equilibrium if: 
\begin{equation}
\mu>\mu_{d}\equiv\frac{(2\sigma-1)(1-\phi)}{(\sigma-1)(2\sigma+\phi-1)};\label{eq:stability of partial agglomeration in the logit model}
\end{equation}
} 
\item \emph{Partial agglomeration $h^{*}\in\left(\tfrac{1}{2},1\right)$
is the only stable equilibrium and is unique if $\mu<\mu_{d}$.} 
\end{itemize}
\begin{proof}
See Appendix C. 
\end{proof}
When consumer heterogeneity is low there is a single stable partial
agglomeration equilibrium that is very asymmetric (close to agglomeration).
As consumer heterogeneity increases, partial agglomeration corresponds
to a more even distribution. Finally, if consumer heterogeneity is
high enough ($\mu>\mu_{d}$ in (\ref{eq:stability of partial agglomeration in the logit model})),
consumers disperse symmetrically across the two regions.

\begin{figure}[H]
\begin{centering}
\includegraphics[scale=0.7]{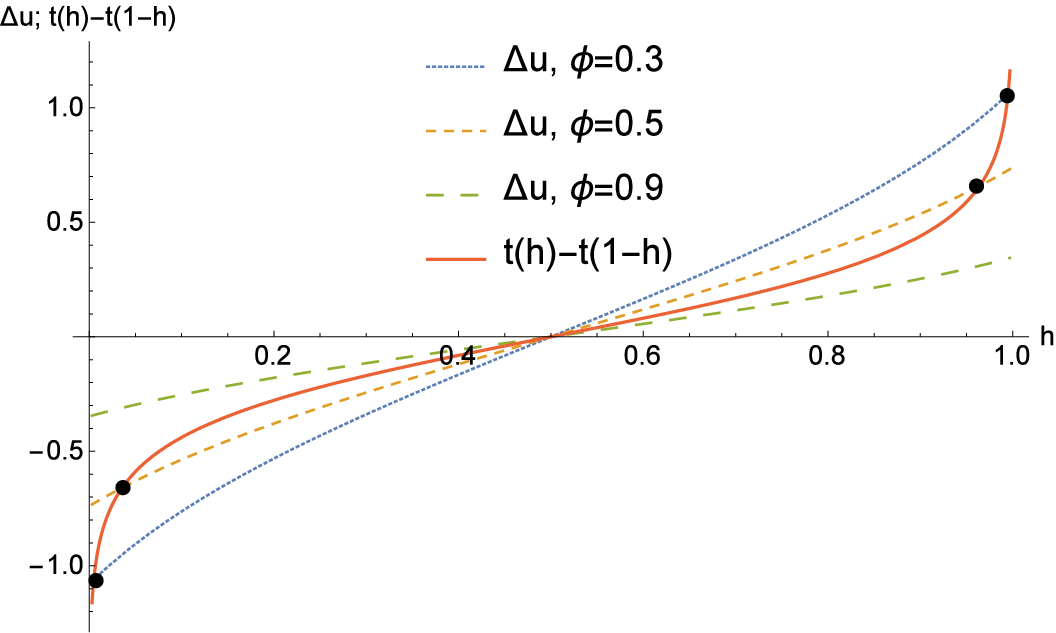}
\par\end{centering}
\caption{{\small{}Utility differential $\Delta u$ (dashed lines) and penalty
differential $t(h)-t(1-h)$ (thick line) for increasing levels of
$\phi$. For $\phi=0.3$ (upper dashed line), dispersion is unstable
and partial agglomeration is stable. For $\phi=0.5$ (medium dashed
line) dispersion is unstable, partial agglomeration is stable and
is less asymmetric. For $\phi=0.9$ (lower dashed line) dispersion
is stable and is the only equilibrium. The values for the parameters
are $\sigma=2.5$, $\theta=0$ and $\mu=0.2$.}}
\label{fig 5}
\end{figure}

In Figure \ref{fig 5} we show how the spatial distribution of industry
changes as regions become more integrated. We set $\sigma=2.5$ and
$\mu=0.2$. For low levels of the freeness of trade, a unique stable
partial agglomeration equilibrium exists where most consumers reside
in region $L$. This means that there is just a small fraction of
consumers in $R$ that are not willing to forego their preferred region
because the gain in consumption goods from doing so is not high enough.
As regions become more integrated, the home-market effect becomes
weaker and partial agglomeration becomes more symmetric. Finally,
for a high level of inter-regional integration, agglomeration forces
are so weak that no consumer is willing to leave his most preferred
region and only symmetric dispersion is stable.

\begin{figure}[H]
\begin{centering}
\includegraphics[scale=0.8]{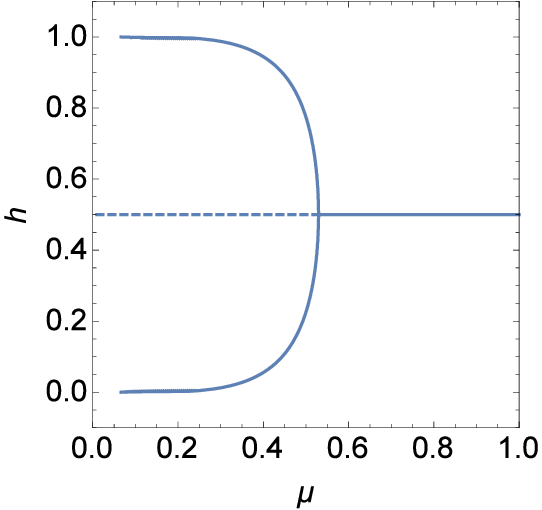} \includegraphics[scale=0.8]{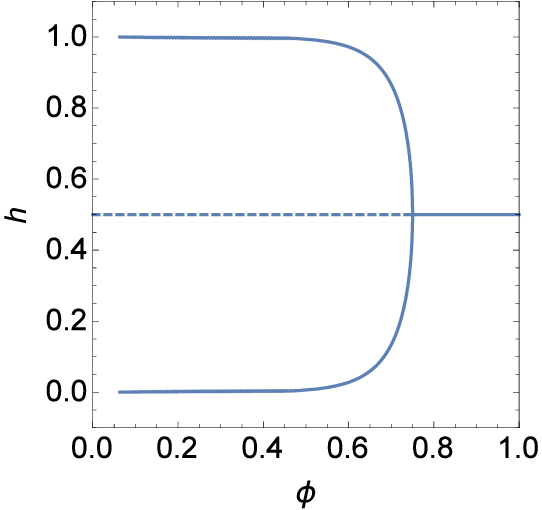}
\par\end{centering}
\caption{{\small{}Bifurcation diagrams. To the left, the bifurcation parameter
is $\mu\in[0,1]$ with $\phi=0.4$. To the right, we have $\phi\in(0,1)$
as the bifurcation parameter and set $\mu=0.2$. For both scenarios
we use $\sigma=2$ and $\theta=0$.}}
\label{fig 6}
\end{figure}

The preceding analysis is summarized by the two bifurcation diagrams
in Figure \ref{fig 6}, along a smooth parameter path where $\mu$
(left picture) and $\phi$ (right picture) increase. We uncover two
pitchfork bifurcations in $\mu$ and $\phi$. We confirm analytically
that they are supercritical for the case of logit preferences and
under Assumption \ref{assu:Let-.}.\footnote{Employing $\phi$ as the bifurcation parameter, we have $\tfrac{\partial^{3}\Delta V}{\partial h^{3}}(\tfrac{1}{2};\phi_{b})<0$,
with $\phi_{b}=\frac{(2\sigma-1)\left[\mu(\sigma-1)-1\right]}{-(\mu+2)\sigma+\mu+1}$
and $\sigma\in\left(2,\frac{\mu+1}{\mu}\right)$. Then by the proof
of Lemma \ref{lem:A-curve-of}\textbf{ }(Appendix B) the pitchfork
is supercritical. For the case of $\mu$ as the bifurcation parameter,
see the proof of Proposition \ref{prop:Partial agglomeration logit model}
(Appendix C).} This implies that a curve of stable partial agglomeration equilibria
branches in the direction of $\mu<\mu_{d}$ (left picture) and $\phi<\phi_{b}$
(right picture), showing that lower trade barriers or a higher heterogeneity
promote more even spatial distributions. The picture to the right
differs from other supercritical pitchforks found in the literature
(e.g., Pflüger, 2004; Ikeda et al., 2022) crucially in the sense that
the direction of change in stability as $\phi$ increases is reversed,
i.e., lower trade barriers leads to more symmetric spatial distributions,
as expected from our analysis. Regarding the effect of $\mu$, it
is quite intuitive that a higher scale of heterogeneity increases
local dispersion forces (i.e., dispersion forces that do not depend
on $\phi$) and thus also promotes a more even distribution of economic
activities.

\section{Discussion and concluding remarks}

We have seen that heterogeneity in preferences for location alone
bears no impact on the relationship between trade integration and
agglomeration, which is a monotonic decreasing one. This contrasts
the findings in other works with heterogeneity in consumer preferences,
such as Tabuchi and Thisse (2002) and Murata (2003), who show evidence
of a bell-shaped relationship between trade integration and the spatial
distribution of industry, whereby agglomeration but is followed by
a re-dispersion phase as trade integration increases. The former's
setting differs from ours because the authors consider an inter-regionally
immobile workforce whose role as a dispersive force is enhanced by
higher transportation costs.

However, it is particularly worthwhile to discuss the results of Murata
(2003), who uses a model that is a particular case of ours if we set
$\theta=0$ and consider logit type heterogeneity. Murata (2003) found
that the relationship between trade integration and agglomeration
need not be monotonic and depends on the degree of consumer heterogeneity,
which is at odds with our findings. For instance, for an intermediate
degree of consumer heterogeneity $(\mu)$, Murata finds that increasing
trade integration initially fosters agglomeration and later leads
to re-dispersion of industry. However, these conclusions can be shown
to stem from the author's particular choice of the value for the elasticity
of substitution, $\sigma=1.25$. As we have argued in Section 3, such
a low value is empirically implausible. For exceedingly low values
$(\sigma<1.71)$, increasing returns at the firm level are too strong.
Strong enough that the utility gain at dispersion becomes increasing
in $\phi$, instead of decreasing. This would justify an initial concentration
of industry as a result of an increase in $\phi.$ However, we have
seen that for a plausible range of $\sigma$ the utility gain from
consumption at symmetric dispersion always decreases with $\phi$.
Moreover, if $\theta>0$, this holds even for lower values of $\sigma$.\footnote{If $\theta\geq1$, the result holds for $\sigma>1$.}
Therefore, a higher $\phi$ always promotes more equitable distributions
as opposed to asymmetric ones.

Hence, when workers are completely mobile, more trade integration
ubiquitously leads to a more even dispersion of spatial distributions
among the two regions, irrespective of the degree of heterogeneity
in location preferences. Therefore, a \emph{de facto} lower inter-regional
labour mobility induced by consumer heterogeneity alone cannot account
for the predictions that a higher inter-regional integration will
lead to more spatial inequality or an otherwise bell-shaped relationship
between the two.

By considering that all consumers are allowed to migrate if they so
desire, we have shown that a higher inter-regional trade integration
always leads to more dispersed spatial distributions. This result
is independent of the level and impact of consumer heterogeneity.
This conclusion may be of potential use for policy makers. Namely,
the predictions that globalization is likely to lead to higher spatial
inequality in the future (World Bank, 2009) may be reversed if policies
are undertaken to promote inter-regional mobility.

\appendix

\section*{Appendix A - Wages and freeness of trade}

\textbf{Proof of Lemma } \ref{lem:The-relative-nominal}:

\noindent Consider $h=f(w)$ in (\ref{eq:short run equilibrium}).
Let us now define $F(h,w)=f(w)-h=0.$ Differentiation of $F(h,w)$
yields: 
\begin{equation}
\dfrac{\partial F(h,w)}{\partial w}=\dfrac{w^{\sigma}G(w^{\sigma})}{\left[w^{2\sigma}-(w+1)\phi w^{\sigma}+w\right]^{2}},\label{eq:derivative implicit function}
\end{equation}
where: 
\[
G(w^{\sigma})=-\left[\phi(\sigma-1)+(\sigma-1)\phi w^{2\sigma}-\left(2\sigma-\phi^{2}-1\right)w^{\sigma}\right].
\]
The derivative in (\ref{eq:derivative implicit function}) is zero
if $G(w^{\sigma})=0$. One can observe that $G(w^{\sigma})$ has either
two (real) zeros given by some $\left\{ w^{-},w^{+}\right\} ,$ or
none. Moreover, $G(w^{\sigma})$ is concave in $w^{\sigma}.$ Since
$G\left(w^{\sigma}=\phi\right)=\sigma\left[\phi(1-\phi^{2})\right]>0$
and $G\left(w^{\sigma}=\phi^{-1}\right)=-\left[\sigma\phi\left(1-\phi^{-2}\right)\right]>0$,
it must be that $G(w^{\sigma})>0$ for $w^{\sigma}\in[\phi,\phi^{-1}]\subset(w^{-},w^{+})$.

We now proceed to show that $w\in\left(0,\phi^{1/\sigma}\right)$
and $w\in\left(\phi^{-1/\sigma},+\infty\right)$ are not defined in
$h\in[0,1]$. Using (9), we have the following: 
\[
\lim_{w\rightarrow0}h(w)=0;\quad\lim_{w\rightarrow+\infty}h(w)=1;\quad h(\phi^{1/\sigma})=0;\quad h(\phi^{-1/\sigma})=1;
\]
Since $dh/dw=\partial F/\partial w$, we have that $h(w)$ is increasing
for $w^{\sigma}\in(w^{-},w^{+})$. The limits above, together with
the knowledge that the zeros of $dh/dw$ lie to the left and right
of $[\phi,\phi^{-1}]$, ensure that $h(w)<0$ for $w\in\left(0,\phi^{1/\sigma}\right)$
and $h(w)>1$ for $w\in\left(\phi^{-1/\sigma},+\infty\right)$.

Knowing that $\partial F/\partial h=-1$ and $\partial F/\partial w$
are continuous, by the IFT we can write $w:h\in[0,1]\subset\mathbb{R}\mapsto\mathbb{R}$
such that $dw/dh$ exists and $F(h,w(h))=0$.

Using implicit differentiation on (\ref{eq:short run equilibrium}),
we get: 
\begin{equation}
\dfrac{dw}{dh}=\frac{\left[w^{2\sigma}-(w+1)\phi w^{\sigma}+w\right]^{2}}{w^{\sigma}G(w^{\sigma})}.\label{eq:derivative of nominal wage}
\end{equation}
This derivative is positive for $w\in[\phi^{1/\sigma},\phi^{-1/\sigma}]$,
which implies that $w(h)$ is increasing in $[0,1]$. This concludes
the proof for the first statement.

From $F(h,w)=0$ defined above, differentiating with respect to $\phi,$
we get: 
\begin{align*}
\dfrac{\partial F}{\partial\phi}+\dfrac{\partial F}{\partial w}\dfrac{dw}{d\phi} & =0\iff\\
\dfrac{dw}{d\phi} & =-\left.\dfrac{\partial F}{\partial\phi}\right/\dfrac{\partial F}{\partial w}.
\end{align*}
Using (\ref{eq:short run equilibrium}) we get: 
\[
\dfrac{\partial F}{\partial\phi}=\frac{w^{\sigma+1}\left(w^{2\sigma}-1\right)}{\left[w^{2\sigma}-(w+1)\phi w^{\sigma}+w\right]^{2}},
\]
which is positive if $h>1/2$, because the latter implies $w>1$ (because
$w$ is increasing in $h$). As a result, we have $dw/d\phi<0$. Therefore,
the nominal wage $w$ is decreasing in the freeness of trade when
$L$ is the largest region ($h>1/2$). Conversely, we have $dw/d\phi>0$
if $h>1/2$. This concludes the proof of the second statement.\hfill{}$\square$ 

\section*{Appendix B - Interior Equilibria}

\textbf{Proof of Lemma \ref{prop:stability of partial agglomeration}:}

\noindent Taking the derivative of (\ref{eq:utility differential-1})
with respect to $h$, substituting for $w^{\prime}$ using (\ref{eq:derivative of nominal wage})
from Appendix A, and evaluating at $h^{*}$ using the short-run equilibrium
condition given by (\ref{eq:short run equilibrium}), we reach:{\footnotesize
\begin{equation}
\left.\dfrac{d\Delta u}{dh}\right|_{h=h^{*}}=\zeta\left\{ \left(\varphi w^{\sigma\frac{1-\theta}{\sigma-1}}+\frac{\psi}{w}\right)\left\{ \frac{\left(1-\phi^{2}\right)w}{w^{2\sigma}-\left[\varphi(w+1)w^{\sigma}\right]+w}\right\} ^{\frac{1-\theta}{\sigma-1}}\right\} ,\label{eq:stability of partial agglomeration}
\end{equation}
}with:{\small{} 
\begin{align*}
\zeta & =\dfrac{w^{2\sigma}-\left[\phi(w+1)w^{\sigma}\right]+w}{(\sigma-1)\left[(\sigma-1)\phi+(\sigma-1)\phi w^{2\sigma}+\left(-2\sigma+\phi^{2}+1\right)w^{\sigma}\right]};\\
\varphi & =w^{-\theta-\sigma}\left\{ \phi\left[\sigma+(\sigma-1)w\right]w^{\sigma}-2\sigma w+w\right\} ;\\
\psi & =(\sigma-1)\phi+(1-2\sigma)w^{\sigma}+\sigma\phi w.
\end{align*}
}The fact that an interior equilibrium is stable if $d\Delta u/dh<d\left[t(h)-t(1-h)\right]/dh$
at $h=h^{*}$ concludes the proof.\hfill{}$\square$ 

\bigskip{}

\noindent\textbf{Proof of Lemma \ref{lem:A-curve-of}:}

\noindent Suppose there exists a value of $\phi=\phi_{b}\in(0,1)$
such that $\dfrac{d\Delta u}{dh}\left(\tfrac{1}{2};\phi_{b}\right)=t^{\prime}\left(\tfrac{1}{2};\phi_{b}\right)$.
Let $\Delta V=V_{L}-V_{R}$. Then it is possible to show the following:
\begin{align*}
\frac{\partial\Delta V}{\partial h}\left(\tfrac{1}{2};\phi_{b}\right) & =0;\ \ \frac{\partial^{2}\Delta V}{\partial h^{2}}\left(\tfrac{1}{2};\phi_{b}\right)=0;\ \ \frac{\partial\Delta V}{\partial\phi}\left(\tfrac{1}{2};\phi_{b}\right)=0;\\
\frac{\partial^{2}\Delta V}{\partial h\partial\phi}\left(\tfrac{1}{2};\phi_{b}\right) & =-\frac{4\eta\left(\frac{1+\phi_{b}}{2}\right)^{\frac{1-\theta}{\sigma-1}}\left[\theta(1-\phi_{b})+2\sigma+\phi_{b}-3\right]}{(\sigma-1)^{2}(\phi_{b}+1)^{2}};
\end{align*}
The last expression is negative if we assume that $\sigma>1+\sqrt{2}/2$.
Next, we have:
\begin{align*}
\frac{\partial^{3}\Delta V}{\partial h^{3}}\left(\tfrac{1}{2};\phi_{b}\right)=\frac{16(\sigma-2)(2\sigma-3)(1-\phi_{b})^{3}\left(\frac{1+\phi_{b}}{2}\right)^{\frac{1}{\sigma-1}}}{(\sigma-1)^{3}(\phi_{b}+1)^{3}} & -2t^{\prime\prime\prime}\left(\frac{1}{2}\right).
\end{align*}
If $\tfrac{\partial^{3}\Delta V}{\partial h^{3}}\neq0$, then, according
to Guckenheimer and Holmes (2002), the model undergoes a pitchfork
bifurcation at symmetric dispersion and a curve of partial agglomeration
equilibria branches at $\phi=\phi_{b}.$ If the pitchfork is supercritical
($\tfrac{\partial^{3}\Delta V}{\partial h^{3}}<0$), the curve branches
in the direction of $\phi<\phi_{b}$ and partial agglomeration is
stable. If it is subcritical ($\tfrac{\partial^{3}\Delta V}{\partial h^{3}}>0)$,
it branches in the direction of $\phi>\phi_{b}$ and partial agglomeration
is unstable. The criticality of the bifurcation depends on the specific
functional form of $t(x)$. This concludes the proof.\hfill{}$\square$ 

\bigskip{}

\noindent\textbf{Proof of Proposition \ref{prop:As-trade-barriers decrease}:}

\noindent Assume that partial agglomeration $h^{*}=h\in\left(\frac{1}{2},1\right)$
exists and is stable. Then, according to Castro et al. (2022, Prop.
9, pp. 197), it becomes more symmetric if $\frac{d\Delta u}{d\phi}<0$.
We have: 
\[
\dfrac{d\Delta u}{d\phi}=\frac{\partial\Delta u}{\partial\phi}+\frac{\partial\Delta u}{\partial w}\frac{dw}{d\phi},
\]
which, by using (\ref{eq:short run equilibrium}) and (\ref{eq:utility differential-1}),
becomes: 
\begin{equation}
-\frac{w}{a_{3}}\left\{ a_{1}\left[\frac{\left(1-\phi^{2}\right)w^{\sigma+1}}{w^{2\sigma}-(w+1)\phi w^{\sigma}+w}\right]^{-\frac{\theta+\sigma-2}{\sigma-1}}+a_{2}\left[\frac{w(1-\phi^{2})}{w^{2\sigma}-(w+1)\phi w^{\sigma}+w}\right]^{-\frac{\theta+\sigma-2}{\sigma-1}}\right\} \label{eq:derivative}
\end{equation}
where: 
\begin{align*}
a_{1} & =w^{1-\theta}\left(\phi w^{\sigma}-1\right)\left[-2\sigma+2(\sigma-1)\phi w^{\sigma}+\phi^{2}+1\right]\\
a_{2} & =w^{-\sigma}\left(w^{\sigma}-\phi\right)\left[2(\sigma-1)\phi+\left(-2\sigma+\phi^{2}+1\right)w^{\sigma}\right]\\
a_{3} & =(\sigma-1)\left[w^{1-\sigma}+w^{\sigma}-(w+1)\phi\right]\left[(\sigma-1)\phi+(\sigma-1)\phi w^{2\sigma}+\left(-2\sigma+\phi^{2}+1\right)w^{\sigma}\right].
\end{align*}
Notice that (\ref{eq:derivative}) is equivalent to 
\[
-\frac{w\left[\frac{w(1-\phi^{2})}{w^{2\sigma}-(w+1)\phi w^{\sigma}+w}\right]^{-\frac{\theta+\sigma-2}{\sigma-1}}\left(a_{1}w^{-\sigma\frac{\theta+\sigma-2}{\sigma-1}}+a_{2}\right)}{a_{3}}.
\]
Thus, the sign of (\ref{eq:derivative}) is given by the sign of:
\[
\Psi\equiv-\frac{a_{1}w^{-\sigma\frac{\theta+\sigma-2}{\sigma-1}}+a_{2}}{a_{3}}.
\]
Since $w>1$ for $h\in(\frac{1}{2},1]$, we have $a_{3}<0$ because
$\left(w^{1-\sigma}+w^{\sigma}-(w+1)\phi\right)>0$ and 
\[
\left[(\sigma-1)\phi+(\sigma-1)\phi w^{2\sigma}+\left(-2\sigma+\phi^{2}+1\right)w^{\sigma}\right]<0.
\]
Next, taking the limit of $\Psi$ as $\theta\rightarrow+\infty$,
we get:
\[
\lim_{\theta\rightarrow+\infty}\Psi=w^{-\sigma}\left(w^{\sigma}-\phi\right)\left[2(\sigma-1)\phi+\left(-2\sigma+\phi^{2}+1\right)w^{\sigma}\right]<0.
\]

\noindent Differentiating $\Psi$ with respect to $\theta$ yields:
\[
\frac{\ln(w)w^{-\frac{\sigma(\theta+\sigma-2)}{\sigma-1}-\theta+1}\left(\phi w^{\sigma}-1\right)\left[-2\sigma+2(\sigma-1)\phi w^{\sigma}+\phi^{2}+1\right]}{\sigma-1},
\]
which is positive because $1<w^{\sigma}<1/\phi$ and $\left[-2\sigma+2(\sigma-1)\phi w^{\sigma}+\phi^{2}+1\right]<0$.
Thus, $\Psi$ remains negative for all $\theta\geq0$. We conclude
that (\ref{eq:derivative}) is negative and thus $\frac{d\Delta u}{d\phi}<0$
for any $\theta\in\left[0,1\right)\cup\left(1,+\infty\right)$. For
$\theta=1$, take $u_{i}=\ln(C_{i})$ to verify that the derivative
is negative. We conclude that $\frac{d\Delta u}{d\phi}<0,\forall\theta\geq0$.
This finishes the proof. \hfill{}$\square$

\pagebreak{}

\section*{Appendix C - Logit heterogeneity}

\noindent\textbf{Proof of Proposition \ref{prop:Partial agglomeration logit model}:}

\noindent When $t(x)=-\mu\ln(1-x)$, we have:
\[
t^{\prime}\left(h^{*}\right)-t^{\prime}\left(1-h^{*}\right)=\mu\left[\dfrac{1}{h(1-h)}\right].
\]
Using $\theta=1$ in $\tfrac{d\Delta u}{dh}$ given by (\ref{eq:stability of partial agglomeration}),
and rearranging, partial agglomeration is stable if:{\small{} 
\begin{equation}
-\frac{w^{-\sigma-1}\left[w^{2\sigma}-(w+1)\phi w^{\sigma}+w\right]^{2}\Gamma}{(\sigma-1)\left(w^{\sigma}-\phi\right)\left(1-\phi w^{\sigma}\right)\left[(\sigma-1)\phi+(\sigma-1)\phi w^{2\sigma}+\left(-2\sigma+\phi^{2}+1\right)w^{\sigma}\right]}<0,\label{eq:stab partial agg logit 1}
\end{equation}
}where 
\begin{align*}
\Gamma & =\phi\left[\mu(\sigma-1)^{2}-2\sigma+1\right]+\phi\left[\mu(\sigma-1)^{2}-2\sigma+1\right]w^{2\sigma}-\\
 & -w^{\sigma}\left\{ \phi^{2}\left[-(\mu+2)\sigma+\mu+1\right]+(2\sigma-1)\left[\mu(\sigma-1)-1\right]\right\} .
\end{align*}
The numerator of (\ref{eq:stab partial agg logit 1}) except $\Gamma$
is positive, and so is $\left(w^{\sigma}-\phi\right)\left(1-\phi w^{\sigma}\right)$.
We have: 
\[
(\sigma-1)\phi+(\sigma-1)\phi w^{2\sigma}+\left(-2\sigma+\phi^{2}+1\right)w^{\sigma}<0.
\]
Therefore, (\ref{eq:stability of partial agglomeration}) holds if
and only if $\Gamma<0$, which gives: 
\[
\mu>\mu_{p}\equiv\frac{(2\sigma-1)(w^{\sigma}-\phi)(1-w^{\sigma}\phi)}{(\sigma-1)\left[\left(2\sigma-\phi^{2}-1\right)w^{\sigma}-(\sigma-1)\phi w^{2\sigma}-(\sigma-1)\phi\right]}.
\]
The condition for partial agglomeration in (\ref{eq:stab partial agg logit 1})
holds for any interior equilibrium including symmetric dispersion
$h^{*}=1/2$. At symmetric dispersion we have $w=1$ and the condition
above simplifies to: 
\[
\mu>\mu_{d}\equiv\frac{(2\sigma-1)(1-\phi)}{(\sigma-1)(2\sigma+\phi-1)}.
\]
The derivative of $\mu_{p}$ with respect to $X\equiv w^{\sigma}$
equals: 
\[
\dfrac{\partial\mu_{p}}{\partial X}=\frac{\sigma(2\sigma-1)\left(X^{2}-1\right)\phi\left(\phi^{2}-1\right)}{(\sigma-1)\left[(\sigma-1)\left(X^{2}+1\right)\phi-2\sigma X+X\phi^{2}+X\right]^{2}}<0.
\]
We can also see that $\mu_{p}$ approaches zero as $w^{\sigma}$ approaches
$\phi^{-1}$. Therefore, we conclude that $0<\mu_{p}<\mu_{d}$. We
can show that symmetric dispersion undergoes a pitchfork bifurcation
at $\mu=\mu_{d}$, following a procedure similar to that of the proof
of Lemma \ref{lem:A-curve-of}\textbf{ }(see Appendix B) by replacing
$\phi$ with $\mu$ as the bifurcation parameter. The corresponding
derivatives have the same signs when evaluated at $\mu=\mu_{d}$.
Furthermore, we have:
\[
\dfrac{\partial^{3}\Delta V}{\partial h^{3}}(\tfrac{1}{2};\mu_{d})=-\frac{64(1-\phi)\phi\left\{ \sigma\left[\phi(\phi+2)+5\right]-\phi^{2}-3\right\} }{(\sigma-1)(\phi+1)^{3}(2\sigma+\phi-1)}<0,
\]
which ensures that the pitchfork is supercritical and proves that
a branch of stable partial equilibria emerges in the direction of
$\mu<\mu_{d}$. Since the state space is one-dimensional, multiple
partial agglomeration equilibria require the interchange of stability
between consecutive equilibria, which cannot happen when $\mu\in(\mu_{p},\mu_{d})$.\footnote{This holds even if equilibria are irregular.}
We thus have two possibilities: (i) if $\mu\in(\mu_{p},u_{d})$, partial
agglomeration is the only stable equilibrium and is unique; and (ii)
if $\mu>\mu_{d},$ symmetric dispersion is the only stable equilibrium. 

Next, we know from the text that a partial agglomeration equilibrium
always exists if $\mu<\mu_{d}$. Assume now, by way of contradiction,
that $\mu\in(0,\mu_{p})$. Then both dispersion and partial agglomeration
are unstable. Since the state space is one-dimensional, there can
be only one unstable partial agglomeration, which would require agglomeration
to be stable. However, we know that agglomeration is always unstable,
which implies that partial agglomeration is always stable. Hence,
$\mu\notin(0,\mu_{p})$. Therefore, symmetric dispersion is the only
stable equilibrium if $\mu>\mu_{d}$; otherwise, partial agglomeration
is the only stable equilibrium.\hfill{}$\square$ 
\end{document}